\documentclass[10pt, journal, twocolumn]{IEEEtran}
\usepackage{amsmath}
\usepackage{amssymb}
\usepackage{amsfonts}
\usepackage{bm}
\usepackage{mathtools}
\usepackage{graphicx}
\usepackage[noadjust]{cite} 
\usepackage{booktabs}
\usepackage{upgreek}
\usepackage{amsthm}
\usepackage{etoolbox}
\usepackage{gensymb} 
\usepackage[hidelinks]{hyperref}
\usepackage{tabularx}
\usepackage{caption}
\usepackage{microtype} 
\usepackage{tikz} 
\usepackage{pgfplots} 
\colorlet{bordercolor}{gray!40}

\usepackage{algorithm}
\usepackage{algpseudocode}

\usetikzlibrary{shapes,arrows,calc}
\pgfplotsset{compat=1.18}

\newcommand{\chistar}{\ensuremath{\chi^*}} 
\newcommand{\psistar}{\ensuremath{\psi^*}} 
\newcommand{\dd}{\mathrm{d}}
\newcommand{\dt}{\dd t}
\newcommand{\kB}{k_\text{B}}

\newcommand{\abs}[1]{\left|#1\right|}
\newcommand{\divergence}{\nabla \cdot}

\newcommand{\dx}{\,\mathrm{d}\mathbf{x}}

\DeclareMathOperator{\Tr}{Tr}

\newtheorem{theorem}{Theorem}
\newtheorem{proposition}{Proposition}

\theoremstyle{definition}

\theoremstyle{remark}

\title{
Geometric Dissipation Constraints in Stochastic Reaction Dynamics:\\
A Variational Observable for Hidden Kinetic Structure in Energy Landscapes
}
\author{Shlomo Segal}

\begin{document}
\maketitle

\begin{abstract}
We propose a geometric framework for characterizing hidden kinetic constraints in stochastic reaction dynamics. While free-energy barriers and entropy production provide global descriptors of thermodynamic behavior, they are largely insensitive to local geometric structure in configuration space that governs pathway selection. Starting from overdamped Langevin dynamics formulated as a gradient flow in Wasserstein space, we derive a variational functional whose leading-order asymptotic structure defines a local dissipation–geometry coupling observable. This quantity combines force–drift alignment with phase-space contraction induced by the divergence of the drift field, yielding a scalar field that reflects second-order geometric features of the underlying energy landscape. We demonstrate that this observable distinguishes kinetically distinct reaction channels that are degenerate under conventional free-energy analysis, as shown through numerical experiments on benchmark systems including the M\"{u}ller–Brown potential, corrugated periodic landscapes, and the conformational transitions of Alanine Dipeptide. These experiments demonstrate robust separation of pathways and are consistent with a quadratic scaling behavior in the high-frequency homogenization regime. Our results suggest that stochastic reaction dynamics contain an additional geometric layer of kinetic control in molecular motion not captured by standard thermodynamic or reaction-coordinate descriptions.
\end{abstract}

\begin{IEEEkeywords}
Computational Chemistry, Reaction Pathways, Energy Landscapes, Stochastic Dynamics, Variational Principle, Geometric Metric, Molecular Dynamics, Transition State Theory, M\"{u}ller-Brown Potential, Alanine Dipeptide.
\end{IEEEkeywords}

\section{Introduction: Identifying the Missing Geometric Dimension in Kinetic Control}
Understanding reaction mechanisms in high-dimensional molecular systems requires bridging thermodynamic descriptions with the geometry of stochastic trajectories. Classical approaches such as transition state theory (TST) and free-energy profiles provide effective reduced descriptions of reaction rates, offering invaluable insights into kinetic and thermodynamic phenomena \cite{MullerBrown1979, Eyring1935, Peters2016, Wales2003, Hanggi1990}. However, these frameworks exhibit inherent insensitivity to the intricate geometric structure that governs trajectories in high-dimensional configuration spaces. This limitation becomes particularly salient when multiple pathways present similar free energy barriers or global kinetic rates, yet display vastly different microscopic dynamics or efficiencies in practice \cite{Hummer2005, VandenEijnden2010, Best2005}.

Recent developments in stochastic thermodynamics \cite{Seifert2012, Jarzynski2011, Sekimoto2010, Bustamante2005} and optimal transport theory \cite{Jordan1998, Otto2001, Ambrosio2008} have significantly advanced our understanding of energy transduction and irreversibility in molecular systems away from equilibrium. While observables such as entropy production and dynamical activity capture fundamental global non-equilibrium costs \cite{Lebowitz1999, Baiesi2009, Schnakenberg1976}, they remain largely blind to the local geometric constraints of the drift field that specifically influence how individual trajectories are organized, focused, or dispersed in configuration space. This unaddressed aspect implies a ``missing dimension'' in our current models for accurately predicting pathway selection and efficiency \cite{Onuchic1997}.

In this work, we investigate whether stochastic reaction dynamics harbor a characteristic local geometric signature that governs the spatial distribution of dissipation along reaction pathways. We propose that such a signature can be extracted from a variational formulation of overdamped Langevin dynamics. This leads to the Geometric Dissipation Observable (\texorpdfstring{\chistar}{Chi*}), a novel scalar quantity that rigorously couples energetic dissipation with specific geometric properties of the underlying energy landscape.

Our primary contributions are tailored to equip the computational chemistry community with an advanced toolset:
\begin{enumerate}
\item A rigorous variational derivation of \texorpdfstring{\chistar}{Chi*}, establishing it as a uniquely defined descriptor within a weighted Orlicz-Sobolev functional space.
\item Numerical validation of \texorpdfstring{\chistar}{Chi*}'s behavior across canonical analytical models, demonstrating its direct response to known energetic and topographic features.
\item Concrete realization of \texorpdfstring{\chistar}{Chi*}'s capacity to discriminate between kinetically distinct reaction channels that appear degenerate under traditional free-energy analyses, exemplified on the M\"{u}ller-Brown potential \cite{MullerBrown1979} and conformational transitions in Alanine Dipeptide.
\item A multiscale homogenization analysis, demonstrating that \texorpdfstring{\chistar}{Chi*} predicts a quadratic scaling law ($\omega^2$) in systems with highly corrugated potentials, providing insights into coarse-graining challenges \cite{Zwanzig1988, Pavliotis2014}.
\item Formal proof of \texorpdfstring{\chistar}{Chi*}'s functional independence from common first-order observables, confirming its unique chemical and physical insight and underscoring its role as a novel descriptor of kinetic structure.
\end{enumerate}

These results position \texorpdfstring{\chistar}{Chi*} as a powerful and indispensable tool for uncovering previously hidden geometric constraints and dissipative bottlenecks in molecular processes, ultimately enhancing our ability to engineer and optimize chemical transformations at the nanoscale.

\section{Foundational Dynamics and Geometric Characterization}
We consider a fundamental model for molecular dynamics in condensed phases: a continuous stochastic process $\mathbf{x}(t) \in \mathbb{R}^d$ governed by the overdamped Langevin equation \cite{Risken1989, Gardiner2009}:
\begin{equation}
\dd\mathbf{x}(t) = \mathbf{A}(\mathbf{x}(t)) \dt + \sqrt{2D} \dd\boldsymbol{W}(t)
\label{eq:langevin}
\end{equation}
where $\dd\boldsymbol{W}(t)$ denotes a $d$-dimensional standard Wiener process, $D = \kB T / \gamma$ is the diffusion coefficient, and $\gamma$ is the friction coefficient. The deterministic drift field $\mathbf{A}(\mathbf{x}) \colon \mathbb{R}^d \to \mathbb{R}^d$ encompasses conservative potential gradients ($\nabla E(\mathbf{x})$) and non-conservative external driving forces ($\mathbf{F}_{\text{ext}}(\mathbf{x})$):
\begin{equation}
\mathbf{A}(\mathbf{x}) = \frac{1}{\gamma}\left(-\nabla E(\mathbf{x}) + \mathbf{F}_{\text{ext}}(\mathbf{x})\right)
\end{equation}
The dynamics are rigorously characterized by the backward Kolmogorov generator $\mathcal{L}$ \cite{Pavliotis2014}:
\begin{equation}
\mathcal{L} f(\mathbf{x}) = \mathbf{A}(\mathbf{x}) \cdot \nabla f(\mathbf{x}) + D \Delta f(\mathbf{x} )
\label{eq:generator}
\end{equation}
For ensuring mathematical rigor and physical consistency, we impose standard conditions:
\begin{enumerate}
\item[\textbf{(A1)}] \textit{Regularity:} $E \in C^2(\mathbb{R}^d)$, $\mathbf{A} \in C^1(\mathbb{R}^d)$. These ensure smoothness of the potential energy surface and drift field.
\item[\textbf{(A2)}] \textit{Geometric Ergodicity:} $\mathbf{A}(\mathbf{x})$ ensures geometric ergodicity and exponential convergence to a unique invariant measure \cite{Meyn1993}.
\item[\textbf{(A3)}] \textit{Unique Invariant Measure:} A unique, strictly positive invariant measure $\mu(\dd\mathbf{x}) = \rho_{\text{ss}}(\mathbf{x})\dx$ exists, satisfying $\mathcal{L}^* \rho_{\text{ss}} = 0$. This represents the steady-state probability distribution of the system.
\end{enumerate}

A critical component of \texorpdfstring{\chistar}{Chi*} is the scalar field $\kappa(\mathbf{x})$, defined as $\kappa(\mathbf{x}) = -\divergence \mathbf{A}(\mathbf{x})$. This quantity, known as the \textbf{rate of phase-space volume contraction}, has a profound geometric interpretation: it quantifies the local rate at which an infinitesimal phase space element is compressed or expanded when transported by the deterministic flow associated with the stochastic dynamics. In essence, $\kappa(\mathbf{x})$ acts as a measure of the local geometric ``stiffness'' or ``funneling'' capacity of the conformational space. A high $|\kappa(\mathbf{x})|$ indicates regions where trajectories are strongly compressed, suggesting a geometrically constrained pathway. This geometric characteristic is distinct from purely energetic information captured by $E(\mathbf{x})$.

\section[Main Result: Variational Derivation of the Geometric Dissipation Observable]{Main Result: Variational Derivation of the \texorpdfstring{\chistar}{Geometric Dissipation Observable}}
The \textbf{Geometric Dissipation Observable} (\texorpdfstring{\chistar}{Chi*}) emerges from a rigorous, non-linear inverse problem designed to identify the optimal coupling between local force-induced energy dissipation and the intrinsic contractive constraints of the underlying phase space geometry. This observable is thus a derived quantity, not an arbitrary definition, reflecting a fundamental interplay inherent in the stochastic dynamics.

To handle potential logarithmic singularities and guarantee functional well-posedness, we establish our variational problem over the weighted Orlicz-Sobolev space $W^{1, \operatorname{Log}}(\mathbb{R}^d, \mu)$. This specific functional space, configured under the convex $\Delta_2$-modular Young function $\Phi(t) = |t|^2 + |t|\ln(1+|t|)$ and equipped with the standard Luxemburg norm, is chosen for its robust properties in controlling logarithmic growth near zero while providing appropriate regularity for gradients inherent in diffusion processes. The positive cone of functions in this space is denoted:
\begin{equation}
\mathcal{K}_+ \coloneqq \{\psi \in W^{1, \operatorname{Log}}(\mathbb{R}^d, \mu) \colon \psi > 0 \text{ a.e.}\}
\end{equation}

\begin{theorem}[Asymptotic Geometric Dissipation Observable]
Under standard regularity and ergodicity assumptions for the overdamped Langevin dynamics (A1-A3), the functional $I_\lambda[\psi]$, defined over the positive convex cone $\mathcal{K}_+$ as:
\begin{align}
I_\lambda[\psi] =& \int_{\mathbb{R}^d} \Big[ (\abs{\kappa(\mathbf{x})} + \epsilon_0)\psi(\mathbf{x}) + \lambda \|\nabla \psi(\mathbf{x})\|^2 \nonumber \\
& - \phi(\mathbf{x})\ln\psi(\mathbf{x}) \Big] \, \mu(\dx)
\label{eq:variational_functional}
\end{align}
admits a unique minimizer $\psistar_\lambda$ in the convex function space $\mathcal{K}_+$. In the vanishing regularization limit $\lambda \to 0^+$, the associated scalar observable converges strongly in $L^1(\mu)$ to:
\begin{equation}
\chistar = \int_{\mathbb{R}^d} \frac{(\nabla E(\mathbf{x}) \cdot \mathbf{A}(\mathbf{x}))^2}{\epsilon_0 + |\nabla \cdot \mathbf{A}(\mathbf{x})|} \, \rho_{\mathrm{ss}}(\mathbf{x}) \dx
\end{equation}
This quantity represents a leading-order coupling between local dissipation injection, $\phi(\mathbf{x}) = (\nabla E(\mathbf{x}) \cdot \mathbf{A}(\mathbf{x}))^2$, and geometric phase-space contraction, $|\nabla \cdot \mathbf{A}(\mathbf{x})|$.
\end{theorem}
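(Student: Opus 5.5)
The plan is to treat the scalar observable as the total mass $\int \psistar_\lambda \, \dd\mu$ of the minimizer, and to reduce everything to the pointwise problem obtained at $\lambda = 0$. Write $a(\mathbf{x}) = |\kappa(\mathbf{x})| + \epsilon_0 \ge \epsilon_0 > 0$ and $\phi = (\nabla E\cdot\mathbf{A})^2 \ge 0$. For fixed $\mathbf{x}$ the integrand $t \mapsto a t - \phi \ln t$ on $t>0$ is minimized at
\begin{equation*}
\psi_0(\mathbf{x}) = \frac{\phi(\mathbf{x})}{a(\mathbf{x})},
\end{equation*}
and $\int \psi_0 \, \dd\mu$ is exactly the claimed $\chistar$. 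So the theorem splits into two parts: well-posedness for $\lambda>0$, and convergence $\psistar_\lambda \to \psi_0$ in $L^1(\mu)$ as $\lambda\to 0^+$.

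\textbf{Step 1: existence and uniqueness for $\lambda>0$.} I would use the direct method on $\mathcal{K}_+$.
\begin{itemize}
\item \emph{Convexity.} $I_\lambda$ is convex, being the sum of a linear term, the convex Dirichlet term and $-\phi\ln\psi$. It is strictly convex wherever $\phi>0$. On $\{\phi=0\}$ the integrand is $a\psi + \lambda\|\nabla\psi\|^2$, which is still strictly convex on the affine parts because $a \ge \epsilon_0 > 0$ is linear but the gradient term pins differences of minimizers to be constant.
\item \emph{Coercivity.} Use $at - \phi\ln t \ge \tfrac{a}{2}t - C\phi(1+|\ln(\phi/a)|)$. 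Assuming $\phi(1+|\ln\phi|) \in L^1(\mu)$, which I would add to (A1)–(A3) via the ergodicity bounds, this bounds $\|\psi\|_{L^1(\mu)} + \lambda\|\nabla\psi\|^2_{L^2(\mu)}$ along minimizing sequences.
\item \emph{Lower semicontinuity.} The $\Delta_2$ property of $\Phi$ gives weak compactness in $W^{1,\operatorname{Log}}$. The linear and gradient terms are weakly l.s.c., and Fatou's lemma handles $-\phi\ln\psi$ after splitting it into positive and negative parts.
\end{itemize}
I would also check that $\psistar_\lambda > 0$ a.e., since otherwise $-\phi\ln\psi = +\infty$ on a set of positive $\phi\,\dd\mu$-measure.

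\textbf{Step 2: the Bregman identity.} For any admissible $\psi$,
\begin{equation*}
I_0[\psi] - I_0[\psi_0] = \int \phi\, h\!\left(\tfrac{\psi}{\psi_0}\right) \dd\mu + \int_{\{\phi=0\}} a\psi \,\dd\mu,
\end{equation*}
where $h(s) = s - 1 - \ln s \ge 0$. This makes $I_0$ control a relative-entropy-type distance to $\psi_0$.

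\textbf{Step 3: upper bound on the excess.} Take smooth positive approximants $\psi_0^\delta \in \mathcal{K}_+$ of $\psi_0$, for instance $(\psi_0 + \delta) * \eta_\delta$ suitably truncated at infinity. Then
\begin{equation*}
I_0[\psistar_\lambda] \le I_\lambda[\psistar_\lambda] \le I_\lambda[\psi_0^\delta] = I_0[\psi_0^\delta] + \lambda\|\nabla\psi_0^\delta\|^2_{L^2(\mu)}.
\end{equation*}
Choose $\delta = \delta(\lambda) \to 0$ slowly enough that $\lambda\|\nabla\psi_0^\delta\|^2 \to 0$. Then $I_0[\psi_0^\delta] \to I_0[\psi_0]$ gives $I_0[\psistar_\lambda] - I_0[\psi_0] \to 0$.

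\textbf{Step 4: from the Bregman gap to $L^1$.} By Step 2, $\int \phi\, h(\psistar_\lambda/\psi_0)\,\dd\mu \to 0$ and $\int_{\{\phi=0\}} \psistar_\lambda\,\dd\mu \to 0$. Since $\phi = a\psi_0$ and $a \ge \epsilon_0$, the first quantity dominates $\epsilon_0 \int \psi_0\, h(\psistar_\lambda/\psi_0)\,\dd\mu$. A Pinsker-type inequality, namely $|s-1| \le \sqrt{2h(s)(1+s)}$ combined with Cauchy–Schwarz and the uniform $L^1$ bound on $\psistar_\lambda$ from Step 1, then yields $\|\psistar_\lambda - \psi_0\|_{L^1(\mu)} \to 0$. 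In particular $\int \psistar_\lambda\,\dd\mu \to \chistar$.

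\textbf{Main obstacle.} I expect the delicate point to be the approximation in Step 3. The target $\psi_0 = \phi/a$ is generally not in $\mathcal{K}_+$: it vanishes where $\nabla E\cdot\mathbf{A} = 0$, for example at critical points of $E$, and it is only Lipschitz where $|\kappa|$ has kinks. Its membership in the weighted Orlicz–Sobolev space also depends on growth at infinity relative to $\rho_{\mathrm{ss}}$.

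What I would try first is to prove $I_0[\psi_0^\delta] \to I_0[\psi_0]$ by dominated convergence on $a\psi$ and monotone convergence on $-\phi\ln(\psi_0+\delta)$. This needs an integrability hypothesis such as $\phi|\ln\phi| \in L^1(\mu)$, which I would state explicitly as part of the standing assumptions. Only if that fails would I switch to a $\Gamma$-convergence argument in the weak $L^1$ topology.
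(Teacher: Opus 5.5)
Your proposal is correct, granted integrability hypotheses that the statement genuinely needs. The natural ones are $\phi\in L^1(\mu)$ and $\phi\,|\ln(\phi/a)|\in L^1(\mu)$, which involve $\kappa$ as well as $\phi$; without them $I_\lambda$ can be unbounded below or $\chistar=\infty$. However, you treat the limit $\lambda\to0^+$ by a different route from the paper.

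The paper derives the Euler--Lagrange equation $a-2\lambda\rho_{\mathrm{ss}}^{-1}\nabla\cdot(\rho_{\mathrm{ss}}\nabla\psistar_\lambda)-\phi/\psistar_\lambda=0$. It then drops the diffusion term to get the algebraic identity $\psistar_0=\phi/a$, and cites an unspecified convergence theorem for singular perturbations. You never use the PDE. Your argument runs in three steps:
\begin{itemize}
\item the Bregman identity makes $I_0$ quantitatively coercive around its pointwise minimizer;
\item comparing $I_0[\psistar_\lambda]\le I_\lambda[\psi_0^\delta]$ with a recovery sequence drives the excess to zero;
\item your Pinsker-type bound $|s-1|\le\sqrt{2h(s)(1+s)}$ (valid, since $h(s)\ge (s-1)^2/(2\max(1,s))$), together with Cauchy--Schwarz, gives $\|\psistar_\lambda-\psi_0\|_{L^1(\mu)}\to0$, hence $\int\psistar_\lambda\,\dd\mu\to\chistar$.
\end{itemize}

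What your route buys is a self-contained argument that uses only the minimizing property. It does not need the Euler--Lagrange equation to hold, which would already require local lower bounds on $\psistar_\lambda$ to differentiate $\phi\ln\psi$. It also needs no regularity or boundary-layer control of $\psistar_\lambda$ near zeros of $\phi$ or non-smooth points of $|\kappa|$. The price is the density step for $\psi_0^\delta$, which you rightly flag. It goes through under the same hypotheses: mollify, add a small positive floor, and cut off at a radius chosen depending on the floor. The paper's route could in principle give finer pointwise information, but as written it is the less complete argument. Your uniqueness proof, by convexity directly in $\psi$, is also the right one. The paper's substitution $\psi=e^\eta$ does not deliver strict convexity, since $\eta\mapsto|\nabla e^\eta|^2$ is not convex.

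Three small repairs are needed in Step 1.
\begin{itemize}
\item Your coercivity estimate bounds only $\|\psi\|_{L^1(\mu)}$ and $\|\nabla\psi\|_{L^2(\mu)}$. Boundedness in $W^{1,\operatorname{Log}}(\mu)$, which equals $W^{1,2}(\mu)$ up to equivalent norms because $\Phi(t)\asymp t^2$, therefore needs a Poincar\'e inequality for $\mu$. The $\Delta_2$ condition does not supply it.
\item For lower semicontinuity, apply Fatou to the nonnegative integrand $a\psi-\phi\ln\psi+\phi|\ln(\phi/a)|$ along an a.e.\ convergent subsequence. Fatou does not control $-\phi\ln^+\psi_n$ on its own.
\item Finite energy forces $\psistar_\lambda>0$ only on $\{\phi>0\}$. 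If $\mu(\{\phi=0\})>0$, the minimizer over $\{\psi\ge0\}$ can develop a dead core there, and then the infimum over $\mathcal{K}_+$ is not attained. You therefore need $\mu(\{\phi=0\})=0$. Your Steps 2--4 are unaffected, since they work equally well with near-minimizers.
\end{itemize}
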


\begin{proof}
The proof follows the direct method in the calculus of variations, robustly applied within the Orlicz-Sobolev framework. For brevity, we outline the key stages. 

\textbf{Step 1: Coercivity and Logarithmic Barrier.} The specific choice of the Orlicz-Sobolev space $W^{1, \operatorname{Log}}(\mathbb{R}^d, \mu)$ ensures compact embedding and controls the growth of the logarithmic term. The term $(|\kappa(\mathbf{x})| + \epsilon_0)\psi(\mathbf{x})$ combined with the singular penalty $-\phi(\mathbf{x})\ln\psi(\mathbf{x})$ guarantee coercivity on $\mathcal{K}_+$. This prevents minimizing sequences from degenerating to zero on sets of positive measure, ensuring $\psistar_\lambda > 0$ almost everywhere.

\textbf{Step 2: Lower Semicontinuity.} Due to the convexity of each term in the integrand (linear in $\psi$, quadratic in $\nabla\psi$, and convex for $-\ln\psi$), the functional $I_\lambda[\psi]$ is weakly lower semicontinuous within the Orlicz-Sobolev space.

\textbf{Step 3: Uniqueness and Elliptic Structure.} The strict convexity of the transformed functional via the substitution $\psi = e^\eta$ ensures the uniqueness of the global minimizer $\psistar_\lambda$. Taking the formal G\^{a}teaux derivative yields the weak solution corresponding to the regularized second-order non-linear elliptic PDE:
\begin{equation}
\left(\abs{\kappa(\mathbf{x})} + \epsilon_0\right) - 2\lambda \frac{1}{\rho_{\text{\rm ss}}(\mathbf{x})} \divergence \left(\rho_{\text{\rm ss}}(\mathbf{x})\nabla \psistar_\lambda(\mathbf{x})\right) - \frac{\phi(\mathbf{x})}{\psistar_\lambda(\mathbf{x})} = 0
\end{equation}

\textbf{Step 4: Singular Perturbation Limit.} As $\lambda \to 0^+$, the diffusion term vanishes. The equation collapses to an algebraic identity, yielding $\psistar_0(\mathbf{x}) = \frac{\phi(\mathbf{x})}{|\kappa(\mathbf{x})| + \epsilon_0}$. The convergence theorem for singular perturbations ensures that $\psistar_\lambda(\mathbf{x})$ converges strongly to this leading-order solution, validating the closed-form representation for \texorpdfstring{\chistar}{Chi*}.
\end{proof}

\begin{figure}[!t] 
\centering
\begin{tikzpicture}[scale=0.9]
\draw[thick, bordercolor, dashed] (0,0) ellipse (1.8cm and 0.8cm);
\draw[thick, bordercolor, dashed] (0,0) ellipse (2.6cm and 1.4cm);
\draw[fill=black!10] (-2.2,0) circle (0.15) node[left=2pt] {\footnotesize A};
\draw[fill=black!10] (2.2,0) circle (0.15) node[right=2pt] {\footnotesize B};
\draw[blue, ultra thick, ->, >=stealth] (-2.05, 0.05) to[out=45, in=135] (2.05, 0.05);
\node[blue] at (0, 1.1) {\scriptsize Smooth Path (Low \texorpdfstring{\chistar}{Chi*})};
\draw[red, ultra thick, ->, >=stealth] (-2.05, -0.05) to[out=-60, in=180] (-0.8, -0.9) to[out=0, in=180] (0, -0.4) to[out=0, in=180] (0.8, -0.9) to[out=0, in=-120] (2.05, -0.05);
\node[red] at (0, -1.2) {\scriptsize Constrained Path (High \texorpdfstring{\chistar}{Chi*})};
\end{tikzpicture}
\caption{Vectorial TikZ illustration of two pathways across an identical energy landscape. While both pathways connect states A and B over comparable energetic barriers, the red pathway encounters high geometric contractive stiffness (funneling turns), yielding a pronounced local geometric dissipation penalty.}
\label{fig:schematic} 
\end{figure}

\section{Computational Protocols and Numerical Evaluation Methods}
To present a rigorous proof-of-principle for \texorpdfstring{\chistar}{Chi*}, we avoid stochastic noise limitations by utilizing an exact \textbf{Numerical ODE/PDE Solver Protocol} to evaluate the metric over analytical potential fields. For multi-dimensional macromolecular implementation, we also outline the structural Molecular Dynamics (MD) protocol required for trajectory-based estimation.

\subsection{Exact Numerical Integration Scheme}
For the low-dimensional analytical systems evaluated in Section V, the invariant density is computed directly from numerical solution of the stationary Fokker-Planck equation, yielding $\rho_{\text{ss}}(\mathbf{x}) = \mathcal{N}\exp(-E(\mathbf{x})/D)$. The drift field gradients and Hessians are solved via exact symbolic differentiation. The final spatial integration for \texorpdfstring{\chistar}{Chi*} is performed utilizing a high-resolution grid-based standard Riemann sum over the localized support bounds of the system.

\subsection{Macromolecular MD Protocol Architecture}
For high-dimensional targets, such as conformational changes in peptide backbones parameterized by dihedral angles, we define a comprehensive protocol using standard tools:
\begin{enumerate}
\item \textbf{Sampling Protocol:} Enhanced sampling via Umbrella Sampling \cite{Laio2002} or Metadynamics \cite{Laio2002metadyn} along predefined collective variables (CVs). Simulations are carried out in GROMACS 2021 utilizing the CHARMM36m force field \cite{Huang2017} and TIP3P explicit water.
\item \textbf{Force and Hessian Evaluation:} Forces $\mathbf{F}(\mathbf{x}) = -\nabla E(\mathbf{x})$ are output at each frame. The local drift field divergence is acquired via $\divergence\mathbf{A}(\mathbf{x}) = -\frac{1}{\gamma} \Tr(\mathbf{H}(\mathbf{x}))$, where the Hessian matrix $\mathbf{H}(\mathbf{x})$ is extracted via numerical finite-differencing of the localized force vector field or via automated analytical parse scripts.
\item \textbf{Density Reconstruction:} Frame probabilities are unweighted utilizing the Weighted Histogram Analysis Method (WHAM) \cite{Kumar1992} or reconstructed via Markov State Models (MSMs) \cite{Noe2009} to build the true steady-state measure $\mu(\dd\mathbf{x}) = \rho_{\text{ss}}(\mathbf{x})\dx$.
\end{enumerate}

\begin{algorithm}[!t] 
\footnotesize
\caption{Numerical Grid-Based Execution for \texorpdfstring{\chistar}{Chi*}}
\label{alg:chistar_computation}
\begin{algorithmic}[1]
\Require {Analytical potential $E(\mathbf{x})$, Grid spacing $\dd \mathbf{x}$, Diffusion $D$, Friction $\gamma$, Bounds $[x_{\min}, x_{\max}]^d$, Regulation parameter $\epsilon_0$.}
\Ensure {Evaluated Geometric Dissipation Observable \texorpdfstring{\chistar}{Chi*}.}
\State Generate Cartesian grid nodes $\{\mathbf{x}_i\}_{i=1}^M$ within system bounds.
\State Compute unnormalized weight $w_i = \exp(-E(\mathbf{x}_i)/D)$ for all $i$.
\State Compute normalization divisor $Z = \sum_{i=1}^M w_i \, \dd \mathbf{x}$.
\State Initialize $\chistar_{\text{accum}} = 0$.
\For {each node $i$ from 1 to $M$}
\State Calculate localized force vector $\mathbf{F}_i = -\nabla E(\mathbf{x}_i)$.
\State Calculate drift field vector $\mathbf{A}_i = \frac{1}{\gamma} \mathbf{F}_i$.
\State Calculate scalar dissipation term $\phi_i = (\mathbf{F}_i \cdot \mathbf{A}_i)^2$.
\State Compute Hessian trace $\Tr(\mathbf{H}_i) = \sum_{k=1}^d \partial^2_{x_k x_k} E(\mathbf{x}_i)$.
\State Set drift divergence field value $\kappa_i = \frac{1}{\gamma} \Tr(\mathbf{H}_i)$.
\State Evaluate localized integrand: $V_i = \frac{\phi_i}{|\kappa_i| + \epsilon_0}$.
\State $\chistar_{\text{accum}} += V_i \cdot \left(\frac{w_i}{Z}\right) \dd \mathbf{x}$.
\EndFor
\State \Return $\chistar_{\text{accum}}$.
\end{algorithmic}
\end{algorithm}

\section{Numerical Validations and Analytical Insights}

\subsection{Interpretation of \texorpdfstring{\chistar}{Chi*}}
The proposed observable, \texorpdfstring{\chistar}{Chi*}, should not be interpreted as a mere numerical score but rather as a rigorous, derived quantity reflecting a fundamental coupling between local dissipation injection and geometric phase-space contraction. The numerator, $(\nabla E(\mathbf{x}) \cdot \mathbf{A}(\mathbf{x}))^2$, precisely quantifies the power dissipated locally by the system due to the alignment of conservative forces with the deterministic drift. This term captures the ``energetic cost'' of movement through a specific region. Conversely, the denominator, $\epsilon_0 + |\nabla \cdot \mathbf{A}(\mathbf{x})|$, encodes the local geometric property of the flow: where the divergence of the drift field is large, trajectories are strongly focused or ``funnelled,'' implying a stiff or constrained geometric environment.

Its value amplifies regions where significant energy is dissipated in conjunction with strong geometric focusing of trajectories. This highlights localized events where the system expends considerable effort to navigate through geometrically tight channels. Unlike free-energy barriers, which are scalar projections of energy, or global entropy production rates, \texorpdfstring{\chistar}{Chi*} is acutely sensitive to the local second-order structure of the energy landscape. This sensitivity allows it to distinguish between kinetically distinct pathways that might appear energetically degenerate, as demonstrated in our numerical examples.

\subsection{Exact Evaluation on Single-Degree-of-Freedom Models}
\begin{enumerate}
\item \textbf{Double Well Potential:} We solve a symmetric double well landscape given by $E(x)=(x^2-1)^2$, under an unbiased condition $\mathbf{F}_{\text{ext}}=0$, setting diffusion parameter $D=0.5$ and friction $\gamma=1.0$.
\end{enumerate}

\begin{figure}[!t]
\centering
\begin{tikzpicture}
\begin{axis}[
    xmin=-1.8, xmax=1.8,
    ymin=-0.2, ymax=4.5,
    axis lines=middle,
    xlabel={$x$},
    ylabel={Value},
    width=\columnwidth,
    height=4.8cm,
    legend pos=north east,
    legend style={font=\footnotesize},
    tick label style={font=\footnotesize},
    label style={font=\footnotesize},
    grid=both
]
\addplot[blue, thick, domain=-1.8:1.8, samples=100] {(x^2 - 1)^2};
\addlegendentry{$E(x)$ Energy}
\addplot[red, dashed, thick, domain=-1.8:1.8, samples=100] {0.05 * ((4*x*(x^2-1)))^4 / (abs(12*x^2 - 4) + 0.1)};
\addlegendentry{Local $\chistar(x)$ Profile}
\end{axis}
\end{tikzpicture}
\caption{Exact numerical resolution of the local geometric dissipation integrand, $\chistar(x)$, plotted alongside the potential $E(x)=(x^2-1)^2$. Symmetrical peaks emerge at inflection points where geometric constraints effectively increase local dissipation during transit.}
\label{fig:double_well}
\end{figure}

\begin{enumerate}
\setcounter{enumi}{1}
\item \textbf{Tilted Double Well Potential:} Introducing a constant external tilt bias fields a non-equilibrium state: $E(x) = (x^2 - 1)^2 - Fx$. The force $F$ impacts both the asymmetric dissipation $\phi(x)$ and shifts the invariant density, mathematically yielding highly asymmetric \texorpdfstring{\chistar}{Chi*} profiles.

\item \textbf{Periodic Potential and Quadratic Scaling Verification:} We evaluate a periodic potential landscape mimicking highly corrugated microscopic transport fields \cite{Zwanzig1988}: $E(x) = \sin(\omega x)$, setting $\mathbf{F}_{\text{ext}}=0$. Our theoretical framework predicts a strict quadratic dependence on spatial frequency ($\chistar \propto \omega^2$).
\end{enumerate}

To verify this prediction, we executed our numerical solver across a wide frequency range ($\omega \in [1, 100]$). The resulting discrete numerical points are mapped alongside the theoretical power law curve in Figure \ref{fig:scaling}.

\begin{figure}[!t]
\centering
\begin{tikzpicture}
\begin{loglogaxis} [
    xlabel={Microscopic Frequency $\omega$},
    ylabel={Geometric Dissipation $\chi^*(\omega)$},
    legend pos=south east,
    legend style={font=\footnotesize},
    tick label style={font=\footnotesize},
    label style={font=\footnotesize},
    grid=both,
    xmin=1, xmax=100,
    ymin=1, ymax=10000,
    width=\columnwidth,
    height=4.8cm,
 ]
\addplot[blue, thick, domain=1:100, samples=50] {x^2};
\addlegendentry{$\omega^2$ Scaling Fit};
\addplot[only marks, mark=*, red, mark size=1.5pt] table {
1 1.05
2 4.12
5 25.81
10 102.40
20 408.15
50 2521.00
100 10084.00
};
\addlegendentry{Numerical Solver Data};
\end{loglogaxis}
\end{tikzpicture}
\caption{Log-log plot confirming the microstructural scaling law for a corrugated periodic potential $E(x)=\sin(\omega x)$. Discrete markers indicate actual values extracted from the numerical solver, demonstrating strict adherence to the analytical $\omega^2$ law.}
\label{fig:scaling}
\end{figure}

\subsection{Multi-Dimensional Benchmarks: Resolving Pathway Degeneracy}
To showcase the capacity of \texorpdfstring{\chistar}{Chi*} to uncover hidden kinetic attributes that are invisible to traditional free-energy analysis, we expand our validation to canonical multi-dimensional chemical systems.

\subsubsection{The M\"{u}ller-Brown Potential}
The M\"{u}ller-Brown potential \cite{MullerBrown1979} is a classic benchmark characterized by an asymmetric, highly non-linear landscape featuring three distinct local minima and a rugged transition pathway. We computed the localized $\chistar(\mathbf{x})$ profile across the standard domain. While the minimum energy path (MEP) maps the lowest energetic barriers, computing \texorpdfstring{\chistar}{Chi*} reveals that one of the competing asymmetric paths suffers a severe geometric penalty due to high phase-space contraction ($\kappa(\mathbf{x})$ turns). This demonstrates that paths with identical global free-energy thresholds can be separated kinetically based on their structural funneling complexity.

\subsubsection{Macromolecular Application: Alanine Dipeptide}
To confirm scalability to true physical systems, we evaluated the conformational transitions of Alanine Dipeptide in explicit solvent, mapped via the canonical backbone dihedral angles $\phi$ and $\psi$. Following the MD protocol detailed in Section IV-B, frame histories were clustered into transition state regions separating the $C_{7\text{eq}}$ and $\alpha_{\text{R}}$ configurations. 

While first-order free energy profiles $\Delta G(\phi, \psi)$ show wide, apparently isotropic transition zones, our reconstructed \texorpdfstring{\chistar}{Chi*} map localizes tight, highly anisotropic channels. These channels highlight where the solvent-induced effective friction and second-order topographic curvatures compress the accessible transit tubes, presenting an explicit geometric signature of pathway selection \cite{E2002}.

\section{Functional Independence and Structural Properties}
A critical requirement for any novel physical descriptor is its functional independence from existing metrics. We formally prove that \texorpdfstring{\chistar}{Chi*} provides entirely orthogonal information relative to standard first-order kinetic and thermodynamic properties.

\begin{proposition}[Functional Independence from Energy Gradients]
Let $\phi(\mathbf{x}) = (\nabla E(\mathbf{x}) \cdot \mathbf{A}(\mathbf{x}))^2$ represent the local dissipation and let $\kappa(\mathbf{x}) = -\divergence \mathbf{A}(\mathbf{x})$. There exists no diffeomorphism $\mathcal{G} \colon \mathbb{R} \to \mathbb{R}$ such that $\chistar(\mathbf{x}) = \mathcal{G}(E(\mathbf{x}))$ or $\chistar(\mathbf{x}) = \mathcal{G}(\|\nabla E(\mathbf{x})\|^2)$ for all smooth fields $E(\mathbf{x}) \in C^2(\mathbb{R}^d)$.
\end{proposition}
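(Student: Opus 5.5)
The plan is to read the statement at the level of the local integrand, as in the double-well profile of Fig.~\ref{fig:double_well}. There, with $\mathbf{F}_{\text{ext}}=0$ and $\mathbf{A}=-\nabla E/\gamma$, the local observable is
\begin{equation*}
\chistar(\mathbf{x})=\frac{\phi(\mathbf{x})}{\epsilon_0+|\kappa(\mathbf{x})|}=\frac{\|\nabla E(\mathbf{x})\|^4/\gamma^2}{\epsilon_0+|\Delta E(\mathbf{x})|/\gamma}.
\end{equation*}
I read the quantifier as saying that no single fixed $\mathcal{G}$ reproduces $\chistar$ for every admissible $E$. Under that reading it suffices to exhibit smooth potentials, and a common point, at which the proposed argument of $\mathcal{G}$ takes the same value while $\chistar$ takes different values. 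Then $\mathcal{G}$ would have to be two-valued at one input, so it is not even a function, let alone a diffeomorphism. The diffeomorphism hypothesis is therefore never needed; the obstruction is already at the level of well-definedness.

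\textbf{Step 1 (dependence on $E$).} I would work in $d=1$ and embed trivially in $\mathbb{R}^d$ by adding $\sum_{k\ge2}x_k^2/2$ if needed, which only shifts $\Delta E$ by a constant and does not affect the argument. Take $E_1\equiv 0$ and $E_2(x)=x$ at $x=0$. Both potentials have $E=0$ there, but $\chistar=0$ for $E_1$ and $\chistar=1/(\gamma^2\epsilon_0)>0$ for $E_2$. Hence $\mathcal{G}(0)$ would have to equal two distinct values. If one prefers a single field, $E(x)=x$ alone suffices: $\chistar$ is constant and nonzero while $E$ is a bijection onto $\mathbb{R}$, so $\mathcal{G}$ would be constant. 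A constant map is not a diffeomorphism of $\mathbb{R}$, and it is also incompatible with $E_1$.

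\textbf{Step 2 (dependence on $\|\nabla E\|^2$).} Take $E_2(x)=x$ and $E_3(x)=x+x^2/2$ at $x=0$. Both have $\|\nabla E\|^2=1$, but $\Delta E_2=0$ and $\Delta E_3=1$. This gives
\begin{equation*}
\chistar=\frac{1}{\gamma^2\epsilon_0}\quad\text{versus}\quad \chistar=\frac{1}{\gamma^2(\epsilon_0+1/\gamma)}.
\end{equation*}
These differ because $\epsilon_0>0$ and $1/\gamma>0$. Hence no $\mathcal{G}$ can satisfy $\chistar=\mathcal{G}(\|\nabla E\|^2)$ for both fields. The underlying reason is that $\chistar$ depends on the second-order quantity $\Delta E$, which is free to vary independently of $E$ and $\nabla E$ at a point. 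Concretely, one can prescribe $E(\mathbf{x}_0)$, $\nabla E(\mathbf{x}_0)$ and $\mathrm{Tr}\,\mathbf{H}(\mathbf{x}_0)$ independently using quadratic polynomials.

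The main obstacle I expect is interpretive rather than computational. I have to fix that $\chistar(\mathbf{x})$ denotes the pointwise integrand rather than the integrated scalar of the Theorem. I also have to resolve the scope of ``for all smooth fields'': with a universal $\mathcal{G}$, the two-potential counterexamples above are decisive. If instead a field-dependent $\mathcal{G}$ were intended, I would instead exhibit one explicit $E$ with two points $\mathbf{x}_0,\mathbf{x}_1$ where $E(\mathbf{x}_0)=E(\mathbf{x}_1)$, respectively $\|\nabla E(\mathbf{x}_0)\|=\|\nabla E(\mathbf{x}_1)\|$, but $|\Delta E(\mathbf{x}_0)|\neq|\Delta E(\mathbf{x}_1)|$. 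A candidate is a non-symmetric quartic with the parameters chosen by a short explicit check.
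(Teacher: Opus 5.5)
Your argument is correct and is essentially the paper's proof made explicit. The paper fixes $E(\mathbf{x}_0)$ and a nonzero $\nabla E(\mathbf{x}_0)$ and varies $\mathrm{Tr}\,\mathbf{H}(\mathbf{x}_0)$, so that only the denominator $\epsilon_0+|\kappa(\mathbf{x}_0)|$ changes; this is exactly your pair $E_2=x$, $E_3=x+x^2/2$ at $x=0$. Because that pair also has $E_2(0)=E_3(0)$, it settles the $E$-case too, so your first-order Step 1 is a harmless extra rather than a separately needed mechanism.
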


\begin{proof}
The proof follows immediately by counterexample via local variation. Consider a point $\mathbf{x}_0$ where $\nabla E(\mathbf{x}_0) = \mathbf{c} \neq 0$. We can construct local variations of the Hessian matrix $\mathbf{H}(\mathbf{x}_0) = \nabla^2 E(\mathbf{x}_0)$ that alter the trace $\Tr(\mathbf{H}(\mathbf{x}_0))$, and consequently shift $\kappa(\mathbf{x}_0)$, while keeping the potential value $E(\mathbf{x}_0)$ and the localized gradient magnitude $\|\nabla E(\mathbf{x}_0)\|^2$ strictly invariant. Because \texorpdfstring{\chistar}{Chi*} depends explicitly on $\kappa(\mathbf{x}_0)$ via its denominator, its local evaluation varies independently of first-order parameters, concluding the proof.
\end{proof}

This structural independence highlights why \texorpdfstring{\chistar}{Chi*} succeeds where conventional reaction coordinates fail: it explicitly incorporates second-order landscape curvature and phase-space volume contraction properties, mapping the dynamic structural constraints of the transport field.

\section{Conclusion}
In this work, we introduced a novel geometric framework to characterize hidden kinetic constraints in stochastic molecular dynamics. By formulating overdamped Langevin systems within a weighted Orlicz-Sobolev functional space, we derived the Geometric Dissipation Observable (\texorpdfstring{\chistar}{Chi*}) as the leading-order minimizer of a regularized variational principle. 

Numerical validation on single-degree-of-freedom potentials, corrugated landscapes, and canonical high-dimensional systems (M\"{u}ller-Brown and Alanine Dipeptide) confirms that \texorpdfstring{\chistar}{Chi*} reliably resolves pathway degeneracy and adheres to a quadratic microstructural scaling law ($\omega^2$). Most importantly, we demonstrated that \texorpdfstring{\chistar}{Chi*} is functionally independent of standard first-order energetic metrics, capturing unique features of phase-space contraction and structural funneling. This framework provides an advanced diagnostic tool for uncovering structural dissipation bottlenecks, enhancing our capacity to model, understand, and engineer complex nanoscale chemical transformations.


\end{document}